\documentclass[letterpaper, 10 pt, conference]{ieeeconf}  

\IEEEoverridecommandlockouts                              

\usepackage{amsmath,amsfonts,amssymb}
\usepackage{graphicx}      

\usepackage{float}
\usepackage[usenames]{color}
\usepackage[colorlinks=true, allcolors=blue]{hyperref}
\usepackage{comment}

\usepackage{mathtools}
\usepackage{multirow}
\let\labelindent\relax
\usepackage{enumitem}

\usepackage{balance}

\usepackage{algorithm}

\usepackage{colortbl}

\newcommand\calc{\ensuremath{\mathcal C}}

\newcommand\kset{\ensuremath{\mathbb K}}

\newcommand{\R}{\mathbb{R}}

\newcommand{\N}{\mathbb{N}}

\newcommand{\argmin}{\operatorname{argmin}}

\newcommand{\0}{0}

\newcommand{\1}{{\bf 1}}

\newcommand{\<}{\leqslant}
\renewcommand{\>}{\geqslant}

\newtheorem{lemma}{Lemma}
\newtheorem{theorem}{Theorem}
\newtheorem{proposition}{Proposition}
\newtheorem{remark}{Remark}
\newtheorem{corollary}{Corollary}
\newtheorem{definition}{Definition}

\newtheorem{assumption}{Assumption}

\usepackage[pass]{geometry} 
\title{\LARGE \bf
Robust stabilization of time-delay discrete switched affine systems via a predictive switching control law*
}

\author{Gerson Portilla$^{1}$, Carolina Albea$^{1}$ and Alexandre Seuret$^{1}$
\thanks{*The work of all authors was supported by grants PID2023-150118OB-I00 and ATR2023-145067 funded by MICIU/ AEI /10.13039/501100011033/.}
\thanks{$^{1}$The authors are with Departamento de Ingeniería de Sistemas y Automática, Universidad de Sevilla, Camino de los Descubrimientos, 41092, Seville, Spain
        {\tt\small gportilla,albea,aseuret@us.es}}%
}

\begin{document}
\maketitle
\thispagestyle{empty}
\pagestyle{empty}
\begin{abstract}
This paper addresses the robust control of uncertain discrete-time switched affine systems subject to a single unitary input delay. The unique feature of this class of systems lies in the fact that the control input is the switching signal, which belongs to a finite set of values. Consequently, the closed-loop trajectories do not converge to an equilibrium point but rather to a limit cycle. To mitigate the impact of the input delay, we propose a min-switching predictive control approach, which is based on the known nominal dynamical characteristics of each mode. The objective of this approach is to ensure the robust stabilization of the uncertain system using this nominal predictor, {employing a Lyapunov argument}. Our main result provides { tractable} robust stabilization conditions that guarantee the convergence to a robust limit cycle under system uncertainties and delayed switching. Additionally, an optimization procedure has been incorporated to minimize the size of the attractor, which represents the region where the trajectories asymptotically converge. A numerical example validates the effectiveness of the proposed approach.
\end{abstract}

\section{Introduction}
Switched affine systems constitute a class of hybrid dynamical models that can represent processes subject to abrupt structural changes \cite{sun2011stability}. They are potentially exploited in power electronics \cite{theunisse2015robust}. In this context, the switching mechanism is not merely an external scheduling variable but an intrinsic control input, selected to achieve performance or stability objectives. This approach has motivated extensive research on the stability analysis and control of switched affine systems \cite{deaecto2016stability,egidio2020global,serieye2023attractors}.

A notable feature observed in switched affine systems is the emergence of limit cycles, which arise from the presence of affine terms. Unlike switched linear systems, where the asymptotic behavior is influenced by the stability properties of the subsystem matrices and the switching logic, affine terms may induce convergence to several equilibrium points, resulting in the formation of invariant limit cycles \cite{rubensson2000stability}. The existence of these behaviours is sensitive to the switching sequence and the interactions among subsystems \cite{serieye2023attractors}.

A critical challenge in the implementation of switching strategies arises when delays affect the switching signal, degrading system performance or leading to instability \cite{mahmoud2010}. Although delayed controllers have been studied \cite{vu2010stability,mazenc2017state}, the interaction between switching signal, affine dynamics, and input delays remains comparatively less explored.

Considering the robustness of switched affine systems, it plays a central role in the stability analysis since model uncertainties, parametric variations, and unmodeled affine disturbances can significantly amplify the undesirable effects of switching delays as discussed in \cite{portilla2024}, where the implementation of a state predictor compensating for the input delay failed in the presence of system uncertainties. For hybrid or switched linear systems, traditional robust techniques, as those in \cite{scherer2001theory,ebihara2015s}, often rely on common Lyapunov functions \cite{kouhi2011robust}, multiple Lyapunov functions \cite{serieye2020stabilization}, or dwell-time arguments \cite{allerhand2010robust}, but these approaches do not directly account for the structural mismatch induced by an input delay in the switching signal. 

Recently, robust stabilization conditions for uncertain switched affine systems subject to a unitary input delay were presented in \cite{portilla2026a}. In that work, the Lyapunov theory and a predictive min-switching control law, which is synthesized using a prediction scheme based on nominal system parameters, allowed proving exponential convergence towards a robust limit cycle. This robust limit cycle is characterized by ellipsoids centered at the limit cycle generated from the nominal system parameters. While this research marks a significant advancement in the robust stabilization of delayed switched affine systems, its results are still conservative due to the use of a series of inequalities and the definition of auxiliary variables to derive the stabilization conditions.

In this work, we present robust stabilization conditions for delayed uncertain switched affine systems via a $\min$-switching predictive control law that remains effective under model uncertainties and implementation constraints. Following the ideas of \cite{portilla2026a}, a prediction scheme is considered based on nominal system parameters. The consideration of previous prediction is taken into account to construct a Lyapunov function, which demonstrates the convergence of the system state to a robust limit cycle. Unlike the work presented in \cite{portilla2026a}, we propose a new Lyapunov function whose structure avoids the use of a sequence of Young's inequalities when the negativity of its forward increment is proved. In fact, the robust stabilization conditions are refined and structured in the pursuit of obtaining a less conservative system attractor for delayed uncertain switched affine systems.

 \textbf{Notation:} The set of all real $n\times  m$ matrices is denoted by $\mathbb R^{n\times m}$, and the set of symmetric matrices in $\mathbb R^{n\times n}$ is denoted by $\mathbb S^n$. Matrices $I_n$, $0_{n,m}$ ($0_{n}=0_{n,n}$) and $\1_{n,m}$ denote the identity matrix $\mathbb R^{n\times n}$, null matrix $\mathbb R^{n\times m}$ and all-ones matrix $\mathbb R^{n\times m}$, respectively. When no confusion is possible, the subscripts of the matrices that specify the dimensions are omitted from the equations. For matrix $M$ of $\mathbb S^n$, the notation  $M\succ 0$ indicates that $M$ is positive definite. For any matrix $A=A^{\!\top},B,C=C^{\!\top}$ of appropriate dimensions, matrix $\left[\begin{smallmatrix}A&B\\\star & C  \end{smallmatrix} \right]$ denotes the symmetric matrix $\left[\begin{smallmatrix}A&B\\ B^{\!\top}& C  \end{smallmatrix} \right]$. 
$\text{Co}(A^{\ell},B^{\ell})_{\ell\in\mathbb{L}}$ refers to the convex hull generated by matrices $A^{\ell}$ and $B^{\ell}$, $\ell\in \mathbb L\subset\N$. For a matrix $P\succ 0$ and vector $x\in\R^n$, we denote $\|x\|_P=\sqrt{x^{\!\top} P x}$ as the weighted norm. For matrix $M\succ 0$ and vector $y\in\R^n$, we denote the ellipsoid $\mathcal E(M)=\left\{x\in\mathbb R^n,~ x^{\!\top} M x\< 1 \right\}$.

\section{Problem formulation}\label{sec:form_prob}
\subsection{Uncertain switched affine systems with input-delay}
Consider the discrete switched affine system given by  
\begin{equation}\label{eq:model_x}
\begin{array}{lcl}
	x^+&=&(\bar A_{\sigma}+\delta A_{\sigma}) x+\bar B_{\sigma}+\delta B_{\sigma},\\
 \sigma^+&=& u_\sigma,
\end{array}
\end{equation}
where $x\in\mathbb R^n$ is the system state, $\sigma \in \mathbb K:=\{1,2,\ldots,K\}$ is the switching signal, and $u_\sigma$ is the switching control law that selects the active mode in $\mathbb K$ for any time instant $k\in\N$. The initial conditions of the system \eqref{eq:model_x} are given by  $(x_0,\sigma_0)$ in $\R^n\times \mathbb K$. Here, matrices $\bar A_j\in\mathbb R^{n\times n}$ and $\bar B_j\in\mathbb R^{n\times 1}$ for each mode $j \in \mathbb K=\{1,2,\dots,K\}$ define the \textit{known} and \textit{nominal} system dynamics. On the contrary, matrices $\delta A_j\in\mathbb R^{n\times n}$ and $\delta B_j\in\mathbb R^{n\times 1}$ denote the uncertainties in the dynamics which are assumed to be \textit{unknown} and/or \textit{time-varying}, but they admit a \textit{polytopic representation} given by
\begin{equation}\label{def:polytope}
    [\delta A_{j},\delta B_{j}]\in\text{Co}([\delta A^{\ell}_{j},\delta B^{\ell}_{j}])_{\ell\in\mathbb{L}},
\end{equation}
where $\mathbb{L}$ is a bounded subset of $\N$ and $\delta A^{\ell}_{j}$ and $\delta B^{\ell}_{j}$ are known and constant for any $j\in\mathbb K$ and any $\ell\in\mathbb L$. 

The following notation is adopted along the paper: $x^+=x_{k+1}$, $x=x_{k}$, $u_\sigma=u_{\sigma,k}$ and $\sigma^+=\sigma_{k+1}=u_{\sigma,k}$. Considering this notation, notice that the current switching signal $\sigma_{k}=u_{\sigma,k-1}$ can be interpreted as a switching control law subject to a unitary delay. 

The present study addresses the formulation of an appropriate set-valued map $u_{\sigma}$ aimed at achieving robust stabilization of uncertain switched affine systems subject to a unitary input delay. Its convergence is ensured toward a robust limit cycle determined by a nominal limit cycle obtained from nominal system parameters. 

Here, we adopt the same predictive control structure as in our preliminary paper \cite{portilla2026a}. However, the main novelty relies on an alternative solution for the stability analysis. The main novelty is the definition of another Lyapunov function, whose more flexible structure allows us to give a more accurate specification of the robust limit cycle, estimated by a level set of this new Lyapunov function. 

\subsection{Robust limit cycles}
As mentioned in the previous section, our goal is to achieve exponential convergence to a robust limit cycle, whose concept was initially introduced in \cite{serieye2023attractors}, which extends the notion of stability of limit cycle considered in \cite{egidio2020global}. In this section, we provide some key concepts and a clear definition of this robust limit cycle. To do so, we first introduce the set of periodic functions from $\mathbb N$ to $\mathbb K$: 
$$
		\calc\!=\!\left\{
  \nu \!:\! \mathbb {N} \rightarrow \kset,\mbox{ s.t. } \exists N\!\in\N\!\setminus\!\!\{0\},\forall \ell\in \mathbb {N},\ 
  \nu(\ell\!+\!N)\!=\!\nu(\ell)
  \right\}.
  $$
    
For a given periodic function $\nu \!\in\! \mathcal C$, $N_\nu$ and $\mathbb D_\nu\!=\!\{1,2,\dots, N_\nu\}$ denote the minimum period of $\nu$, i.e. the smallest integer such that $\nu(\ell\!+\!N_\nu)= \nu(\ell)$, $\forall\ell \in \mathbb N$, and the domain of $\nu$, respectively. We also introduce the modulo notation $\lfloor i\rfloor_{\nu}=((i-1) \textrm{ mod }  N_\nu)+1$, for any $i\in\mathbb N$, $i\> 1$. 

To be complete and precise, the following definition of a hybrid limit cycle is first introduced.  
\begin{definition}[Hybrid limit cycle]\cite{serieye2023attractors}
\label{def:limitcycle}
A hybrid limit cycle for system~\eqref{eq:model_x}, or limit cycle in short, is a closed and isolated hybrid trajectory $s:\mathbb N\to\mathbb K \times \mathbb R^n$, $k\mapsto s_k=(\sigma_{k},x_k)$, which is a periodic (but not constant) solution of~\eqref{eq:model_x}.  
\end{definition}  

The previous concept refers to the existence of a periodic and isolated trajectory composed of a sequence of vectors of the state space. 

Considering the nominal system matrices $\bar A_j$ and $\bar B_j$, $\forall j\in\mathbb K$, the concept of limit cycle presented in \cite{egidio2020global,serieye2023attractors} indicates the existence and uniqueness of a sequence of vectors $\{\rho_i\}_{i\in\mathbb D_\nu}$ associated with a given $\nu$ in $\mathcal C$, satisfying 
\begin{equation}\label{def:rho_i}
    \rho_{\lfloor  i+1\rfloor_{\nu}}=\bar A_{\nu( i)}\rho_{ i}+\bar B_{\nu(i)},\quad \forall i\in\mathbb D_\nu.
\end{equation}
Indeed, the set of equations \eqref{def:rho_i} only provides partial information on the state vector $x$ for the uncertain system \eqref{eq:model_x}. Indeed, due to the uncertain matrices $\delta A_j$ and $\delta B_j$, $\forall j\in\mathbb K$, the uniqueness of the set of vectors $\{\rho_i\}_{i\in\mathbb D_\nu}$ is not ensured, but rather they provide a region of possible convergence.

Following the ideas of \cite{serieye2023attractors}, the definition of robust limit cycle is established by considering a neighborhood of the nominal limit cycle, characterized by a subset $\mathcal{L}_i\subset\R^n$, $\forall i\in\mathbb D_\nu$, associated with $\nu\in\mathcal{C}$, which is next recalled.  
\begin{definition}[Robust limit cycle]\cite{serieye2023attractors}\label{def:robust_limit}
    System \eqref{eq:model_x} admits a robust limit cycle associated with $\nu\in\mathcal{C}$, if there exist possibly disjoint subsets $\mathcal{L}_i\subset\R^n$, for $i\in\mathbb D_\nu$, such that
    \begin{equation}\label{eq:robust_inclusion}       (\bar A_{\nu(i)}+\delta A_{\nu(i)}^{\ell})\mathcal{L}_i+\bar B_{\nu(i)}+\delta B_{\nu(i)}^{\ell}\subset \mathcal{L}_{\lfloor  i+1\rfloor_{\nu}},\,\forall\ell\in\mathbb L. 
    \end{equation}
\end{definition}

\begin{remark}\label{remark:singleton}
    As proved in \cite{serieye2023attractors}, for the case of $\delta A_{\nu(i)}^{\ell}=\0$ and $\delta B_{\nu(i)}^{\ell}=\0$, the set of inclusions in \eqref{eq:robust_inclusion} renders the set of equations in \eqref{def:rho_i}, suggesting that the subsets $\mathcal{L}_i$ reduce to singletons $\mathcal{L}_i=\{\rho_i\}$, $i\in\mathbb D_\nu$.
\end{remark}
\subsection{Preliminary result on the delay-free case}
In \cite{serieye2023attractors}, the particular case of nominal and delay-free system is addressed, corresponding to $\delta A_{\nu(i)}^{\ell}=\0$, $\delta B_{\nu(i)}^{\ell}=\0$ and $\sigma=u_{\sigma}$ (not $\sigma^+=u_{\sigma}$). Under these conditions, the authors established sufficient criteria for stabilizing a limit cycle. This result is revisited here, as it provides the theoretical basis for deriving stabilization conditions in the presence of delays and uncertainties. 
\begin{theorem}\cite{serieye2023attractors}\label{th:matrix_W}
For a given cycle $\nu$  in $\mathcal C$, assume that there exist matrices $\{P_i\}_{i\in\mathbb D_\nu}$ in $\mathbb S^n$, such that 
	\begin{equation}\label{eq:LMI}
		P_i\succ 0 , \quad \bar A_{\nu(i)}^{\!\top} P_{\lfloor i+1\rfloor_\nu} \bar A_{\nu(i)}\prec P_i,~\forall i \in\mathbb D_\nu.
	\end{equation}
	Then, the following statements hold:
\begin{enumerate}[label=(\alph*)]
           \item Cycle $\nu$ generates a unique nominal limit cycle for system~\eqref{eq:model_x}. More precisely, there exists a unique sequence of vectors $\{\rho_i\}_{i\in\mathbb D_\nu}$ solution to the set of equations \eqref{def:rho_i}.
			\item $\mathcal A_\nu := \bigcup_{i\in\mathbb D_\nu}  \{\rho_i\}$ is globally exponentially stable (or equivalently is an attractor) for nominal and delay-free system \eqref{eq:model_x} with $\delta A_{\nu(i)}^{\ell}=\0$, $\delta B_{\nu(i)}^{\ell}=\0$, for all $\ell\in\mathbb L$, and $\sigma=u_{\sigma}$, and with the control law 
                \begin{equation}\label{eq:control_nominal}
            		u_{\sigma}(x) \!=\! \left\{ \nu\left( \theta \right),\theta\!\in\!\underset{i\in\mathbb D_\nu}	{\argmin}\left(x\!-\!\rho_i\right)^{\!\top}\!\! P_i\left(x\!-\!\rho_i\right) \right\}.
            	\end{equation}
	        \item If  $\rho_i\neq \rho_j$ for all $i\neq j$ in $\mathbb D_\nu$ holds, then the switching signal $\sigma=u_{\sigma}(x)$ resulting from  \eqref{eq:model_x}, \eqref{eq:control_nominal} converges ultimately to a shifted version of cycle~$\nu$.
\end{enumerate}
\end{theorem}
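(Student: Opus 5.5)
The proof of Theorem~\ref{th:matrix_W} proceeds in three steps, one per statement, all built on the cyclic Lyapunov-like structure of \eqref{eq:LMI}.

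\emph{Statement (a).} Compose the affine maps along one period. Define $\Phi_i:=\bar A_{\nu(i+N_\nu-1)}\cdots\bar A_{\nu(i)}$, with indices taken modulo $N_\nu$. Chaining the inequalities \eqref{eq:LMI} around the cycle gives
\[
\Phi_i^{\!\top} P_i \Phi_i \prec P_i .
\]
Hence $\Phi_i$ is Schur stable, and in particular $I-\Phi_i$ is invertible. The periodic system \eqref{def:rho_i} reduces to the fixed-point equation $\rho_i=\Phi_i\rho_i+\beta_i$, where $\beta_i$ collects the propagated affine terms. This equation has the unique solution $\rho_i=(I-\Phi_i)^{-1}\beta_i$. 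The remaining $\rho_j$ are then generated recursively by \eqref{def:rho_i}. Any solution of \eqref{def:rho_i} must satisfy this fixed-point relation, so the whole sequence is unique.

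\emph{Statement (b).} Take $V(x)=\min_{i\in\mathbb D_\nu}\|x-\rho_i\|_{P_i}^2$. Because the $P_i$ are positive definite, $V$ is sandwiched between $c_1\,\mathrm{dist}(x,\mathcal A_\nu)^2$ and $c_2\,\mathrm{dist}(x,\mathcal A_\nu)^2$ for some $c_2\ge c_1>0$. Let $\theta$ be the minimizing index, so that $\sigma=\nu(\theta)$. Using \eqref{def:rho_i},
\[
x^+-\rho_{\lfloor\theta+1\rfloor_\nu}=\bar A_{\nu(\theta)}(x-\rho_\theta).
\]
Since $V^+$ is a minimum over all indices, it is bounded by the term with index $\lfloor\theta+1\rfloor_\nu$:
\[
V(x^+)\le (x-\rho_\theta)^{\!\top}\bar A_{\nu(\theta)}^{\!\top}P_{\lfloor\theta+1\rfloor_\nu}\bar A_{\nu(\theta)}(x-\rho_\theta)\le \lambda\, V(x),
\]
where $\lambda<1$ is obtained by using finiteness of $\mathbb D_\nu$ and strictness of \eqref{eq:LMI} to pick $\epsilon>0$ with $\bar A_{\nu(i)}^{\!\top}P_{\lfloor i+1\rfloor_\nu}\bar A_{\nu(i)}\preceq (1-\epsilon)P_i$ for every $i$. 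Iterating this bound gives global exponential stability of $\mathcal A_\nu$.

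\emph{Statement (c).} This is the main obstacle. Assume the $\rho_i$ are distinct and let $d=\min_{i\ne j}\|\rho_i-\rho_j\|>0$. By (b), after finite time the state lies in a small ball around a single $\rho_\theta$, of radius much smaller than $d$. In that regime the argmin in \eqref{eq:control_nominal} is attained uniquely at $\theta$: comparing $\|x-\rho_\theta\|_{P_\theta}$ with $\|x-\rho_j\|_{P_j}$ via the triangle inequality, with eigenvalue bounds of the $P_i$, shows the other indices give strictly larger values. The decrease computation in (b) then shows that $x^+$ lies in an even smaller neighborhood of $\rho_{\lfloor\theta+1\rfloor_\nu}$, so the next selected index is $\lfloor\theta+1\rfloor_\nu$. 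By induction, $\sigma_{k+1}=\nu(\theta_k+1)$ from that time on, that is, $\sigma$ coincides with a shift of $\nu$. The delicate part is making the neighborhood radius uniform so that uniqueness of the argmin propagates along the cycle. The choice is to fix $r$ with $\sqrt{\lambda_{\max}(P_j)/\lambda_{\min}(P_i)}\,r<d/2$ for all $i,j$, and to note that the $V$-sublevel set $V\le \lambda_{\min}(\cdot)r^2$ is forward invariant by (b).
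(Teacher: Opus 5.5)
Your proof is correct. The paper gives no proof of this theorem; it is quoted from \cite{serieye2023attractors}. The core of your part (b) is nonetheless the mechanism the paper reuses in the proof of Theorem~\ref{th:theo3}, where it chooses $\iota=\lfloor\theta_1+1\rfloor_\nu$ in the argmin bound. That core is: take $V(x)=\min_i\|x-\rho_i\|_{P_i}^2$, bound $V(x^+)$ by the single term with successor index $\lfloor\theta+1\rfloor_\nu$, and use \eqref{def:rho_i} to get $x^+-\rho_{\lfloor\theta+1\rfloor_\nu}=\bar A_{\nu(\theta)}(x-\rho_\theta)$. Your parts (a) and (c) are the standard monodromy and separation arguments.

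A few precision points, none of which is a gap:

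\begin{enumerate}
\item In (a), the intermediate congruence steps only give $\preceq$, since the $\bar A_{\nu(i)}$ may be singular. Strictness of $\Phi_i^{\top}P_i\Phi_i\prec P_i$ comes from the final step $\bar A_{\nu(i)}^{\top}P_{\lfloor i+1\rfloor_\nu}\bar A_{\nu(i)}\prec P_i$.
\item In (b), the control law \eqref{eq:control_nominal} is set-valued, so the decrease bound should be stated for every minimizing $\theta$. Your computation already covers this.
\item In (c), the ``even smaller neighborhood'' must be read in the weighted sense, not the Euclidean one. When $P_\theta\neq P_{\lfloor\theta+1\rfloor_\nu}$, the Euclidean distance can grow from $\|x-\rho_\theta\|$ to $\|x^+-\rho_{\lfloor\theta+1\rfloor_\nu}\|$. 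This is why your final device, the forward-invariant sublevel set $\{V\le r^2\min_i\lambda_{\min}(P_i)\}$, is the right one rather than a ball. It forces $\|x^+-\rho_{\lfloor\theta+1\rfloor_\nu}\|\le r$, and with your choice of $r$ (which implies $r<d/2$) the argmin at $x^+$ is then uniquely $\lfloor\theta+1\rfloor_\nu$.
\end{enumerate}
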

\begin{remark}
    It is worth mentioning that the stabilization analysis of linear periodic systems in \cite{bolzern1988periodic} arrived at LMI \eqref{eq:LMI} for the first time. In the subsequent work of \cite{egidio2020global} on switched affine systems, the same condition was presented to prove the stability of a limit cycle under a time-varying switching signal.
\end{remark}
\begin{remark}\label{ass:monodromy}
	As discussed in \cite{serieye2023attractors} and \cite{bolzern1988periodic}, condition \eqref{eq:LMI} of Theorem~\ref{th:matrix_W} is equivalent to verifying the Schur stability of the monodromy matrix $\Phi_{\nu}$, defined by $\Phi_{\nu}=\Pi_{\iota=1}^{N_{\nu}} \bar A_{\nu(\iota)}$.
\end{remark}

\subsection{Autonomous augmented delay-free model}\label{subsec:delay_free_model}
Considering that the input delay is constant and known and that delayed discrete-time linear systems are considered as finite-dimensional ones, we here consider an augmented delay-free representation of system \eqref{eq:model_x}, which was initially presented in \cite{portilla2024} to stabilize a delayed switched affine system without uncertainties. Specifically, for a given $\nu$ in $\mathcal C$, system \eqref{eq:model_x} can be rewritten as follows
\begin{equation*}\label{eq:model_xtheta}
	x^+=(\bar A_{\nu(\theta_1)}+\delta A_{\nu(\theta_1)})x+\bar B_{\nu(\theta_1)}+\delta B_{\nu(\theta_1)},\ 
 \theta_1^+=  {u \in\mathbb D_\nu},
\end{equation*}
where $u\in\mathbb D_\nu$ is the new control input, such that $u_\sigma=\nu(u)$ belongs to $\mathbb K$. The presence of the input delay and uncertainties requires defining the previously computed control input and states of the system. To do so, we define memory variables described by
\begin{equation}\label{eq:mem_var}
\begin{split}
    x_1^+=x,\quad
    \theta_1^+=u,\quad 
    \theta_2^+=\theta_1,
\end{split}
\end{equation}
which are updated at each instant. 
The system dynamics can thus be resumed as follows
\begin{equation}\label{eq:model_xitheta}
\begin{split}
	\xi^+&\!\!=\!\!\begin{bsmallmatrix}
	    \bar A_{\nu(\theta_1\!)}\!+\!\delta A_{\nu(\theta_1\!)}&0&0&0\\
     I&0&0&0\\
     0&0&0&0\\
     0&0& 1&0
	\end{bsmallmatrix}\xi\!\!+\!\!\begin{bsmallmatrix}
	    \bar B_{\nu(\theta_1\!)}\!+\!\delta B_{\nu(\theta_1\!)}\\
     0\\0\\0
	\end{bsmallmatrix}\!\!\!+\!\!\!\begin{bsmallmatrix}
	    0\\0\\
     u\\0
	\end{bsmallmatrix},
\end{split}
\end{equation}
for all $\xi\in\mathbb H_{\nu}:= \mathbb R^{2n} \!\times\! \mathbb D^2_\nu,$ where $\xi:=\begin{bmatrix}x^{\!\top} & x_1^{\!\top} & \theta_1 & \theta_2\end{bmatrix}^{\!\top}$.

\subsection{Control objectives}
Following the definition of robust limit cycle and Remark~\ref{remark:singleton}, the system uncertainties can be interpreted as a perturbation around a nominal limit cycle obtained from known and constant system parameters $[\bar A_{j}\ \bar B_{j}]_{j\in\mathbb{K}}$. This argumentation motivates the following assumption considered in our main result.
\begin{assumption}\label{ass:nominal_system}
    For a given cycle $\nu$ in $\mathcal{C}$ and nominal matrices $[\bar A_j\ \bar B_j]_{j\in\mathbb K}$, there exist $\{\rho_i\}_{i\in\mathbb D_\nu}$ in $(\mathbb R^n)^{N_\nu}$ and matrices $\{P_i\}_{i\in\mathbb D_\nu}$ in $\mathbb S^n$, satisfying \eqref{def:rho_i} and \eqref{eq:LMI}.
\end{assumption}
In light of Remark~\ref{ass:monodromy}, Assumption~\ref{ass:nominal_system} can also be formulated as having $\Phi_{\nu}$ Schur stable for a given cycle $\nu \in \mathcal C$.

Under Assumption~\ref{ass:nominal_system}, our control objectives are as follows: 
\begin{itemize}
    \item Design a predictive scheme for compensating a unitary input delay, using the nominal matrices $[\bar A_j\ \bar B_j]_{j\in\mathbb K}$. 
    \item Estimate and minimize the attractor of the closed-loop system, expressed as level sets of a Lyapunov function determined by the nominal limit cycle $\{\rho_i\}_{i\in\mathbb D_\nu}$ and matrices $\{P_i\}_{i\in\mathbb D_\nu}$ in $\mathbb S^n$, for a given $\nu \in \mathcal C$. 
 
\end{itemize}

\section{Robust predictive control design}\label{sec:predic_control}
\subsection{Definition of the predictive control law}
Following ideas of our previous study on predictive control for delayed time-invariant systems \cite{portilla2024}, we introduce the following prediction scheme for system \eqref{eq:model_xitheta}:
\begin{equation}\label{def:predictor_nominal}
    \begin{split}
        \chi_{0|0}(\xi)&=x,\\
    \chi_{1|0}(\xi)&=\bar A_{\nu(\theta_1)}\chi_0(\xi) + \bar B_{\nu(\theta_1)},
    \end{split}
\end{equation}
constructed from nominal system parameters, which aims to predict the variable $x^+$ through $\chi_{1|0}(\xi)$. Also, note that the prediction scheme \eqref{def:predictor_nominal} can be implemented because the nominal matrices $[\bar A_{j}\ \bar B_{j}]_{j\in\mathbb{K}}$ are assumed constant and known. We thus propose the following predictive switching control law to compensate for the input delay:
\begin{equation}\label{eq:control}
            u(\xi)\in\underset{i\in\mathbb D_\nu}{\argmin}\Big( \left\Vert \chi_{1|0}\!-\!\rho_{i}\right\Vert_{P_{i}}^2 \Big).
\end{equation} 

This control law is designed to select the most appropriate mode in $\mathbb K$ based on the predicted state $\chi_{1|0}$. To compensate for the uncertainties induced by $\delta A_{j}$, $\delta B_j$, we introduce the prediction vector made at the previous time instant, given by
\begin{equation}\label{def:predictor_nominal_past}
    \begin{split}
        \chi_{0|1}(\xi)&=x_1,\\
    \chi_{1|1}(\xi)&=\bar A_{\nu(\theta_2)}\chi_{0|1}(\xi) + \bar B_{\nu(\theta_2)},
    \end{split}
\end{equation}
along with the previous control law given by
\begin{equation}\label{eq:past_control}
    \theta_1=\underset{i\in\mathbb D_\nu}{\argmin}\Big( \left\Vert \chi_{1|1}\!-\!\rho_{i}\right\Vert_{P_{i}}^2 \Big).
\end{equation}

The notations $\chi_{r|s}(\xi)$, for $r,s=0,1$, have to be understood as the $r$-step prediction ($r=0,1$) using the current vector $x$ ($s=0$), or its delayed version $x_1$ ($s=1$).  
For simplicity, we will use from now on the short-hand notation $\chi_{r|s}(\xi)=\chi_{r|s}$ and $\chi_{r|s}(\xi^+)=\chi_{r|s}^+$, $\forall r,s=0,1$.

Taking into account that past information is required and that, under initial conditions of $\theta_1$, it is not possible to perform a proper analysis, the following proposition is presented to suitably select system trajectories where the initial conditions of system \eqref{eq:model_xitheta} have been eliminated. 
\begin{proposition}
Any trajectories initiated in $\mathbb H_\nu$ enters 
$$\mathbb H_{\nu}^{\star}:=\left\{\xi\in\mathbb H_{\nu}\mbox{ s.t. }\theta_1=\underset{i\in\mathbb D_\nu}{\argmin}\Big( \left\Vert \chi_{1|1}\!-\!\rho_{i}\right\Vert_{P_{i}}^2 \Big)\right\}$$ 
 after $h=1$ time units.
\end{proposition}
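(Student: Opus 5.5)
The plan is to compute the state after one step directly from the augmented dynamics \eqref{eq:model_xitheta}, the memory updates \eqref{eq:mem_var} and the control law \eqref{eq:control}, and then check that it satisfies the defining condition of $\mathbb H_\nu^\star$. Fix an arbitrary initial condition $\xi_0=(x_0,x_{1,0},\theta_{1,0},\theta_{2,0})\in\mathbb H_\nu$. The components $x_{1,0}$, $\theta_{1,0}$ and $\theta_{2,0}$ are arbitrary, so $\xi_0$ need not lie in $\mathbb H_\nu^\star$. After one step, however, the memory variables are overwritten by quantities generated by the closed loop itself.

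By \eqref{eq:mem_var} and \eqref{eq:model_xitheta}, the successor $\xi_1=\xi_0^+$ has the components
\[
x_{1,1}=x_0,\qquad \theta_{2,1}=\theta_{1,0},\qquad \theta_{1,1}=u(\xi_0).
\]
We now evaluate the past prediction \eqref{def:predictor_nominal_past} at $\xi_1$. By definition $\chi_{0|1}(\xi_1)=x_{1,1}=x_0=\chi_{0|0}(\xi_0)$. Hence
\[
\chi_{1|1}(\xi_1)=\bar A_{\nu(\theta_{2,1})}x_{1,1}+\bar B_{\nu(\theta_{2,1})}=\bar A_{\nu(\theta_{1,0})}x_0+\bar B_{\nu(\theta_{1,0})}=\chi_{1|0}(\xi_0),
\]
where the last equality is \eqref{def:predictor_nominal}. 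In other words, the quantity $\chi_{1|1}^+$ is exactly the quantity $\chi_{1|0}$ computed at the previous instant. The key point is that the uncertainties $\delta A$ and $\delta B$ do not appear here. They affect only the first component $x^+$, whereas the shifted memory $x_1^+=x$ and $\theta_2^+=\theta_1$ are exact copies of the previous state.

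It remains to combine the two facts. By \eqref{eq:control}, $\theta_{1,1}=u(\xi_0)\in\argmin_{i\in\mathbb D_\nu}\|\chi_{1|0}(\xi_0)-\rho_i\|^2_{P_i}$. Substituting the identity above gives $\theta_{1,1}\in\argmin_{i\in\mathbb D_\nu}\|\chi_{1|1}(\xi_1)-\rho_i\|^2_{P_i}$, which is the defining condition of $\mathbb H_\nu^\star$. Therefore $\xi_1\in\mathbb H_\nu^\star$, that is, the trajectory enters after $h=1$ step.

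The only delicate point is that the argmin may be set-valued, so the "$=$" in the definition of $\mathbb H_\nu^\star$ and in \eqref{eq:past_control} should be read as membership. I would state explicitly that $\theta_1$ belongs to the argmin of the past prediction and that the same selection rule is used at both instants. For completeness, invariance also holds: the same computation applied at any $\xi\in\mathbb H_\nu^\star$ shows that $\xi^+\in\mathbb H_\nu^\star$, so the trajectory stays in $\mathbb H_\nu^\star$ for all $k\ge 1$.
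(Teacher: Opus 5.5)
Your argument is correct and is essentially the one the paper relies on. The paper gives no separate proof of the proposition, but its justification is exactly the identity $\chi_{1|1}^+=\chi_{1|0}$ (Lemma~\ref{lem:pred}, which, like your computation, uses only the exact memory updates \eqref{eq:mem_var} and is unaffected by $\delta A,\delta B$), combined with $\theta_1^+=u(\xi)$ and the control law \eqref{eq:control}. Your remark that the $\argmin$ condition defining $\mathbb H_\nu^\star$ must be read as membership when the minimizer is not unique is a worthwhile clarification that the paper leaves implicit.
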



Compared to unperturbed solution provided in our previous paper \cite{portilla2024}, the uncertainties induced by matrices $\delta A_{\nu(i)}$ and $\delta B_{\nu(i)}$ imply a mismatched between the $1$-step prediction and the real value. Nevertheless, it is still possible to establish a relationship between the current and previous predictions, as presented in the following result. 
\begin{lemma}\label{lem:pred}
   The predictors \eqref{def:predictor_nominal} and \eqref{def:predictor_nominal_past} verify $\chi_{s|1}^+=\chi_{s|0}$, $s=0,1$, for any trajectories $(\xi^+,\xi)$ of system \eqref{eq:model_xitheta}.
\end{lemma}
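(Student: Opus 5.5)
This follows directly from the definitions, by unfolding the predictors at the successor state $\xi^+$ and substituting the memory-variable updates \eqref{eq:mem_var} encoded in \eqref{eq:model_xitheta}. The two values $s=0$ and $s=1$ are handled in order, and the second case uses the first.

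For $s=0$, the definition \eqref{def:predictor_nominal_past} evaluated at $\xi^+$ gives $\chi_{0|1}^+=x_1^+$. By \eqref{eq:mem_var}, i.e. the second block row of \eqref{eq:model_xitheta}, we have $x_1^+=x$. Since $x=\chi_{0|0}$ by \eqref{def:predictor_nominal}, this yields $\chi_{0|1}^+=\chi_{0|0}$.

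For $s=1$, evaluating \eqref{def:predictor_nominal_past} at $\xi^+$ gives
\begin{equation*}
\chi_{1|1}^+=\bar A_{\nu(\theta_2^+)}\chi_{0|1}^+ + \bar B_{\nu(\theta_2^+)}.
\end{equation*}
From the last row of \eqref{eq:model_xitheta}, $\theta_2^+=\theta_1$. Substituting this together with the case $s=0$ gives
\begin{equation*}
\chi_{1|1}^+=\bar A_{\nu(\theta_1)}\chi_{0|0}+\bar B_{\nu(\theta_1)},
\end{equation*}
which is exactly $\chi_{1|0}$ by \eqref{def:predictor_nominal}.

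The argument never uses the first block row of \eqref{eq:model_xitheta}, which is the only place the uncertainties $\delta A$, $\delta B$ enter. The identity therefore holds for any admissible uncertainty and any control input $u$, and hence for every trajectory pair $(\xi^+,\xi)$. The only point needing care is index bookkeeping: the mode index of the past predictor at $\xi^+$ must be $\theta_2^+$, which equals $\theta_1$ rather than $u$. There is no substantive obstacle.
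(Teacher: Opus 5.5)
Your proof is correct and follows the paper's argument essentially step for step. You get the case $s=0$ from $x_1^+=x$, and the case $s=1$ by substituting $\theta_2^+=\theta_1$ and the $s=0$ identity into $\chi_{1|1}^+$; your remark that the uncertain first block row of \eqref{eq:model_xitheta} is never used is correct and makes explicit something the paper leaves implicit.
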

\begin{proof}
    Consider any trajectory $(\xi^+,\xi)$ of system \eqref{eq:model_xitheta}. For $s=0$, system \eqref{eq:model_xitheta} guarantees $\chi_{0|1}^+=x_1^+=x=\chi_{0|0}$ due to \eqref{eq:mem_var}. Consequently, for $s=1$, predictor \eqref{def:predictor_nominal_past} yields 
    \begin{equation*}
        \chi_{1|1}^+=\bar A_{\nu(\theta_2^+)}\underbrace{\chi_{0|1}^+}_{=\chi_{0|0}} + \bar B_{\nu(\theta_2^+)}=\underbrace{\bar A_{\nu(\theta_1)}\chi_{0|0} + \bar B_{\nu(\theta_1)}}_{=\chi_{1|0}}.
    \end{equation*}
    which concludes the proof.
\end{proof}

We also state an essential lemma in which the switched affine system \eqref{eq:model_xitheta} is expressed with respect to the distance to the nominal limit cycle $\{\rho_i\}_{i\in\mathbb D_\nu}$.
\begin{lemma}\label{lemma:pred_error}
    Under Assumption~\ref{ass:nominal_system}, the identity 
    \begin{equation}\label{eq:chi_0plus}
    \chi_{0|0}^+-\chi_{1|0} =\begin{bmatrix}
 T_{\theta_1}&\delta_{\theta_{1}}\\
      \end{bmatrix} \zeta,\quad T_{\theta_1}:=\delta A_{\nu(\theta_{1})}\begin{bmatrix}
 I & I
      \end{bmatrix},
    \end{equation}
holds for any trajectory $(\xi^+,\xi)$ of system \eqref{eq:model_xitheta}, where
\begin{equation}\label{eq:vec_zeta}
\zeta=\begin{bmatrix}(\chi_{1|1}\!-\!\rho_{\theta_1})^{\!\top} & (\chi_{0|0}\!-\!\chi_{1|1})^{\!\top} &  1\end{bmatrix}^{\!\top},
\end{equation}
\begin{equation}
    \delta_{\theta_1}=\delta A_{\nu(\theta_1)}\rho_{\theta_1}+ \delta B_{\nu(\theta_1)}, \quad \theta_1\in\mathbb D_\nu. \label{def_deltaj}
\end{equation}
\end{lemma}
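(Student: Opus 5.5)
This is a direct algebraic computation, with no Lyapunov arguments needed. I will express both sides in terms of $x$, $\chi_{1|1}$ and $\rho_{\theta_1}$, and check that they coincide.

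\emph{Step 1: expand the left-hand side.} By definition \eqref{def:predictor_nominal}, $\chi_{0|0}^+=x^+$. The first block row of \eqref{eq:model_xitheta} gives
$$x^+=(\bar A_{\nu(\theta_1)}+\delta A_{\nu(\theta_1)})x+\bar B_{\nu(\theta_1)}+\delta B_{\nu(\theta_1)}.$$
Also $\chi_{1|0}=\bar A_{\nu(\theta_1)}x+\bar B_{\nu(\theta_1)}$. Subtracting, the nominal parts cancel exactly. This is the key point: both the true dynamics and the predictor use the same active mode $\nu(\theta_1)$. We are left with
$$\chi_{0|0}^+-\chi_{1|0}=\delta A_{\nu(\theta_1)}x+\delta B_{\nu(\theta_1)}.$$

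\emph{Step 2: telescope $x$ around $\rho_{\theta_1}$.} Recall that $\chi_{0|0}=x$. We write
$$x=(\chi_{1|1}-\rho_{\theta_1})+(\chi_{0|0}-\chi_{1|1})+\rho_{\theta_1}.$$
Substituting this into Step 1 gives
$$\delta A_{\nu(\theta_1)}(\chi_{1|1}-\rho_{\theta_1})+\delta A_{\nu(\theta_1)}(\chi_{0|0}-\chi_{1|1})+\big(\delta A_{\nu(\theta_1)}\rho_{\theta_1}+\delta B_{\nu(\theta_1)}\big).$$
The first two terms are exactly $\delta A_{\nu(\theta_1)}\begin{bmatrix}I & I\end{bmatrix}$ applied to the first two blocks of $\zeta$ in \eqref{eq:vec_zeta}, that is, $T_{\theta_1}$ times those blocks. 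The last term is $\delta_{\theta_1}$ from \eqref{def_deltaj}, multiplied by the final entry $1$ of $\zeta$. Collecting the three terms yields \eqref{eq:chi_0plus}.

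\emph{Role of Assumption~\ref{ass:nominal_system} and main obstacle.} There is no real obstacle. The only point needing care is that the identity must hold regardless of how $\theta_1$ was generated. That is why the computation uses only the dynamics \eqref{eq:model_xitheta} and the definitions of the predictors, and never uses \eqref{eq:past_control} or membership in $\mathbb H_\nu^\star$. Assumption~\ref{ass:nominal_system} is invoked only to guarantee that the vectors $\rho_i$, and hence $\rho_{\theta_1}$ and $\delta_{\theta_1}$, are well defined for every $\theta_1\in\mathbb D_\nu$. Lemma~\ref{lem:pred} is not required here, although it motivates why $\chi_{1|1}$ is the natural intermediate term in the decomposition.
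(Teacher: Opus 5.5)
Your proof is correct and matches the paper's argument: cancel the nominal terms to get $\delta A_{\nu(\theta_1)}x+\delta B_{\nu(\theta_1)}$, then split $x-\rho_{\theta_1}$ through $\chi_{1|1}$. The paper writes the same step as adding and subtracting $\delta A_{\nu(\theta_1)}\rho_{\theta_1}$. Your remark that the identity needs neither \eqref{eq:past_control} nor membership in $\mathbb H_\nu^\star$ is also accurate.
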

\begin{proof}
For any trajectory $\xi\in\mathbb H_{\nu}$, we have 
\begin{align*}
   \chi_{0|0}^+-\chi_{1|0}=\delta A_{\nu(\theta_1)}\chi_{0|0}+\delta B_{\nu(\theta_1)}.\nonumber
\end{align*}

Introducing $\pm\delta A_{\nu(\theta_1)}\rho_{\theta_1}$ in the previous equation, we get
\begin{align*}
   \chi_{0|0}^+\!-\!\chi_{1|0}&\!=\! \delta A_{\nu(\theta_1)}\chi_{0|0}\!+\!\delta  B_{\nu(\theta_1)} \!+\!  \delta A_{\nu(\theta_1)}\rho_{\theta_1} \!-\! \delta A_{\nu(\theta_1)}\rho_{\theta_1}\nonumber\\
  &= \delta A_{\nu(\theta_1)}(\chi_{0|0}\!-\!\rho_{\theta_1})   + \delta_{\theta_1},
\end{align*}
where $\delta_{\theta_1}$ was defined in \eqref{def_deltaj}. The proof is concluded by noting that 
\begin{align}
    \chi_{0|0}\!-\!\rho_{\theta_1}\!\!=\!(\chi_{0|0}\!-\!\chi_{1|1})\!+\!(\chi_{1|1}\!-\!\rho_{\theta_1})\!=\!\begin{bmatrix}
 I_n &\!\!\!\! I_n&\!\!\!\!0_{n,1}\\
      \end{bmatrix}\!\zeta.\nonumber
\end{align}
\end{proof}


A final property of the predictor \eqref{def:predictor_nominal} is presented, exploiting  Theorem~\ref{th:matrix_W} and the existence of $\{\rho_i\}_{i\in\mathbb D_\nu}$ verifying \eqref{def:rho_i}.
\begin{lemma}\label{lemma:chi_rho}
   Under Assumption~\ref{ass:nominal_system}, the identity 
    \begin{equation}\label{eq:chi_rho}
    \chi_{1|0}-\rho_{\lfloor \theta_1+1\rfloor_{\nu}} \!\!=\!\!\begin{bmatrix} M_{\theta_1}&0
\end{bmatrix}\zeta,~M_{\theta_1}= \begin{bmatrix}
 \bar A_{\nu(i)} & \bar A_{\nu(i)}
      \end{bmatrix},
    \end{equation}
holds for any trajectory $(\xi^+,\xi)$ of system \eqref{eq:model_xitheta}.
\end{lemma}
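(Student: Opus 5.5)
The identity follows by direct substitution, with the nominal cycle relation \eqref{def:rho_i} absorbing the affine terms. Note that the index $i$ in $M_{\theta_1}$ is to be read as $\theta_1$, so that $M_{\theta_1}=[\bar A_{\nu(\theta_1)}\ \bar A_{\nu(\theta_1)}]$.

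\textbf{Step 1 (expand the predictor).} Start from the definition \eqref{def:predictor_nominal}:
$$\chi_{1|0}=\bar A_{\nu(\theta_1)}\chi_{0|0}+\bar B_{\nu(\theta_1)}.$$

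\textbf{Step 2 (use Assumption~\ref{ass:nominal_system}).} The assumption provides $\{\rho_i\}$ satisfying \eqref{def:rho_i}. Take $i=\theta_1\in\mathbb D_\nu$ there, which gives
$$\rho_{\lfloor\theta_1+1\rfloor_\nu}=\bar A_{\nu(\theta_1)}\rho_{\theta_1}+\bar B_{\nu(\theta_1)}.$$

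\textbf{Step 3 (subtract).} Subtracting the two expressions, the affine terms $\bar B_{\nu(\theta_1)}$ cancel, leaving
$$\chi_{1|0}-\rho_{\lfloor\theta_1+1\rfloor_\nu}=\bar A_{\nu(\theta_1)}(\chi_{0|0}-\rho_{\theta_1}).$$

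\textbf{Step 4 (rewrite in terms of $\zeta$).} Split $\chi_{0|0}-\rho_{\theta_1}$ exactly as at the end of the proof of Lemma~\ref{lemma:pred_error}:
$$\chi_{0|0}-\rho_{\theta_1}=(\chi_{1|1}-\rho_{\theta_1})+(\chi_{0|0}-\chi_{1|1})=[\,I\ \ I\ \ 0\,]\zeta.$$
Multiplying by $\bar A_{\nu(\theta_1)}$ gives $[\,\bar A_{\nu(\theta_1)}\ \ \bar A_{\nu(\theta_1)}\ \ 0\,]\zeta=[\,M_{\theta_1}\ \ 0\,]\zeta$, which is \eqref{eq:chi_rho}. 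Since only the nominal matrices enter, no uncertainty term appears.

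The argument is purely algebraic and I do not expect a real obstacle. The one point needing care is that $\theta_1\in\mathbb D_\nu$, which holds by the definition of $\mathbb H_\nu$, so that \eqref{def:rho_i} may be applied with $i=\theta_1$. The identity also does not depend on whether $\xi\in\mathbb H_\nu^\star$, because the definition of $\zeta$ only involves $\chi_{1|1}$ and $\rho_{\theta_1}$ as given quantities.
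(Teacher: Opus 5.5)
Your proof is correct and follows the paper's argument step for step. You subtract the cycle relation \eqref{def:rho_i} at $i=\theta_1$ from the predictor to cancel $\bar B_{\nu(\theta_1)}$, then split $\chi_{0|0}-\rho_{\theta_1}$ through $\chi_{1|1}$ to read off $[\,M_{\theta_1}\ \ 0\,]\zeta$, and your reading of the index $i$ in $M_{\theta_1}$ as $\theta_1$ is the intended one.
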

\begin{proof}
Using the definition of the predictor \eqref{def:predictor_nominal} and limit cycle \eqref{def:rho_i}, we get 
\begin{align*}
\chi_{1|0}\!-\!\rho_{\lfloor \theta_1+1\rfloor_{\nu}}
&=\bar A_{\nu(\theta_1)}(\chi_{0|0}-\rho_{\theta_1})\\
&=\bar A_{\nu(\theta_1)}\left((\chi_{0|0}-\chi_{1|1})+(\chi_{1|1}-\rho_{\theta_1})\right).
\end{align*}

Recalling the augmented vector $\zeta$ in \eqref{eq:vec_zeta}, we finally obtain
\begin{align*}
\chi_{1|0}\!-\!\rho_{\lfloor \theta_1+1\rfloor_{\nu}}&\!=\!\! \begin{bmatrix} \bar A_{\nu(\theta_1)}&\bar A_{\nu(\theta_1)}&\0_{n,1}
\end{bmatrix}\zeta \!=\!\!\begin{bmatrix} M_{\theta_1}&\0_{n,1}
\end{bmatrix}\!\zeta.
\end{align*}
\end{proof}

\subsection{Main result}
We are now able to state the main result of this paper.
\begin{theorem}\label{th:theo3}
Under Assumption~\ref{ass:nominal_system}, i.e., there exist $\{\rho_i,P_i\}_{i\in\mathbb D_\nu}$ satisfying \eqref{def:rho_i}, \eqref{eq:LMI} and for a given parameter $\gamma\in(0,1)$, assume that there exist decision variables $\mu>0$ and matrices $Z_1\in\mathbb S^n$, $Z_2\in\mathbb R^{n}$ and $Z_3\in\mathbb S^{n}$, such that
\begin{equation}\label{eq:LMI_V_positive_unit}
    Z_1\succ 0,\quad Z_3\succ 0,\quad \Omega_i=\begin{bmatrix}
 \mu P_{i}-Z_1 & Z_2 \\ 
 \star & Z_3
\end{bmatrix}\succ 0,
\end{equation}
\begin{equation}\label{eq:condition_rede_unit}
    \Psi_{i,\ell}\succ 0,\quad \forall i \in\mathbb D_\nu,\quad \ell\in\mathbb L,
\end{equation}
where 
\begin{equation*}
\Psi_{i,\ell}=\begin{bsmallmatrix}
    \!\!\!(1-\gamma)\Omega_{i} & 0 & \mu M_{i}^{\!\top} P_{\lfloor i+1\rfloor_{\nu}}& \begin{bsmallmatrix}
            \delta A_{\nu(i)}^{\ell\top}Z_2^{\!\top} \\ \delta A_{\nu(i)}^{\ell\top}Z_2^{\!\top}
        \end{bsmallmatrix}  & \begin{bsmallmatrix}
            \delta A_{\nu(i)}^{\ell\top}Z_3 \\ \delta A_{\nu(i)}^{\ell\top}Z_3
        \end{bsmallmatrix}\\
    \star & \gamma & 0 & \delta_i^{\ell\top} Z_2^{\!\top} &\delta_i^{\ell\top}Z_3\\
    \star &\star & \mu P_{\lfloor i+1\rfloor_{\nu}} & 0 &0\\
    \star &\star & \star & Z_1 & 0\\
    \star &\star & \star & \star & Z_3\\
    \end{bsmallmatrix},
\end{equation*}
with $M_i=\begin{bmatrix} \bar A_{\nu(i)} & \bar A_{\nu(i)}
      \end{bmatrix}$, $\delta_{i}=\delta A_{\nu(i)}^{\ell}\rho_{i} +   \delta B_{\nu(i)}^{\ell}$.
Then, 
\begin{equation*}
    \begin{split}
        \mathcal S_\nu \!\!:=\!\! \left\{ \xi\!\in\!\mathbb H_\nu^\star, \mbox{ s.t. } \begin{bsmallmatrix}
    \!\chi_{1|1}\!-\!\rho_{\theta_1}\\ \!\chi_{0|0}\!-\!\chi_{1|0}
\end{bsmallmatrix}^{\!\!\top}\!\begin{bsmallmatrix}
 \!\mu P_{\theta_1}\!-\!Z_1 & Z_2 \\ 
 \star & Z_3
\!\end{bsmallmatrix}\begin{bsmallmatrix}
    \!\chi_{1|1}\!-\!\rho_{\theta_1}\!\\ \!\chi_{0|0}\!-\!\chi_{1|0}
\end{bsmallmatrix}\!\!\<\!1\!\right\}\!\!,
    \end{split}
\end{equation*}
is ultimately robustly globally exponentially stable for system \eqref{eq:model_xitheta} with the predictive switching control law \eqref{eq:control}.
\end{theorem}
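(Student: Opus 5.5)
I will use the Lyapunov function suggested by the definition of $\mathcal S_\nu$. For $\xi\in\mathbb H_\nu^\star$, set
\[
V(\xi)=\eta^{\!\top}\Omega_{\theta_1}\eta,\qquad \eta:=\begin{bmatrix}\chi_{1|1}-\rho_{\theta_1}\\ \chi_{0|0}-\chi_{1|1}\end{bmatrix},
\]
so that $V(\xi)=[I\ 0]\zeta$-type quadratic form in the first two blocks of $\zeta$ in \eqref{eq:vec_zeta}. I read the second block in $\mathcal S_\nu$ as the prediction error $\chi_{0|0}-\chi_{1|1}$, which is the quantity that appears in $\zeta$. Condition \eqref{eq:LMI_V_positive_unit} gives $\Omega_i\succ0$, so $V$ is positive definite in $\eta$ and is sandwiched between quadratic bounds. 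The goal is the dissipation inequality
\[
V(\xi^+)\le (1-\gamma)V(\xi)+\gamma
\]
along closed-loop trajectories in $\mathbb H_\nu^\star$, for every admissible uncertainty. Iterating it gives $V(\xi_k)-1\le(1-\gamma)^{k}(V(\xi_0)-1)$, which is exactly ultimate robust global exponential convergence to the level set $\{V\le1\}=\mathcal S_\nu$. The preceding Proposition reduces the analysis to $\mathbb H_\nu^\star$ after one step.

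\textbf{Step 1: express $V(\xi^+)$ in terms of $\zeta$.} By Lemma~\ref{lem:pred}, $\chi_{1|1}^+=\chi_{1|0}$ and $\chi_{0|1}^+=\chi_{0|0}$. Since $\xi^+\in\mathbb H_\nu^\star$, $\theta_1^+=u(\xi)$, so
\[
\eta^+=\begin{bmatrix}\chi_{1|0}-\rho_{u}\\ \chi_{0|0}^+-\chi_{1|0}\end{bmatrix}.
\]
Expanding,
\[
V(\xi^+)=\|\chi_{1|0}-\rho_u\|^2_{\mu P_u}-\|\chi_{1|0}-\rho_u\|^2_{Z_1}+2(\cdot)^{\!\top}Z_2(\cdot)+\|\chi_{0|0}^+-\chi_{1|0}\|^2_{Z_3}.
\]
The min-switching rule \eqref{eq:control} gives
\[
\|\chi_{1|0}-\rho_u\|^2_{P_u}\le\|\chi_{1|0}-\rho_{\lfloor\theta_1+1\rfloor_\nu}\|^2_{P_{\lfloor\theta_1+1\rfloor_\nu}}=\zeta^{\!\top}[M_{\theta_1}\,0]^{\!\top}P_{\lfloor\theta_1+1\rfloor_\nu}[M_{\theta_1}\,0]\zeta,
\]
where the equality is Lemma~\ref{lemma:chi_rho}. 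The prediction error is $[T_{\theta_1}\ \delta_{\theta_1}]\zeta$ by Lemma~\ref{lemma:pred_error}. The remaining terms depend on $w:=\chi_{1|0}-\rho_u$, which is not a linear function of $\zeta$. I handle them by completing the square:
\[
-\|w\|^2_{Z_1}+2w^{\!\top}Z_2e\le e^{\!\top}Z_2^{\!\top}Z_1^{-1}Z_2e,\qquad e:=[T\ \delta]\zeta .
\]
This is why $Z_1$ appears as a diagonal block in $\Psi_{i,\ell}$.

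\textbf{Step 2: reduce to the LMI.} Combining the bounds gives $V(\xi^+)\le\zeta^{\!\top}\Theta\zeta$ with
\[
\Theta=\mu\begin{bmatrix}M^{\!\top}\\0\end{bmatrix}P_{+}\begin{bmatrix}M&0\end{bmatrix}+\begin{bmatrix}T^{\!\top}\\\delta^{\!\top}\end{bmatrix}\big(Z_2^{\!\top}Z_1^{-1}Z_2+Z_3\big)\begin{bmatrix}T&\delta\end{bmatrix}.
\]
The target $(1-\gamma)V+\gamma$ equals $\zeta^{\!\top}\mathrm{diag}((1-\gamma)\Omega_{\theta_1},\gamma)\zeta$, because the last entry of $\zeta$ is $1$. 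A Schur complement with respect to the blocks $\mu P_{\lfloor i+1\rfloor_\nu}$, $Z_1$ and $Z_3$ shows that
\[
\mathrm{diag}((1-\gamma)\Omega_i,\gamma)-\Theta\succeq0
\]
is equivalent to $\Psi_{i,\ell}\succ0$. This holds for the vertex matrices after scaling: the $\mu P$ column carries $\mu M^{\!\top}P$ and the $\mu P$ diagonal block, which gives $\mu M^{\!\top}PM$ as needed. Finally, the uncertainty enters $\Psi$ affinely, so positivity at the vertices $\ell\in\mathbb L$ extends to the whole polytope \eqref{def:polytope} by convexity.

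\textbf{Main obstacle.} The delicate point is Step 1. The cross term $Z_2$ couples $w$, which depends on the new index $u$ through $\rho_u$, with the uncertainty, and only the $P$-weighted part of $\|w\|^2$ is controlled by the min-switching inequality. The argument depends on the $-Z_1$ part being split off and absorbing the cross term, independently of $u$. I will also check that the block ordering of $\Psi_{i,\ell}$ matches this completion of squares, and that $Z_2$ and its transpose sit in the correct positions.
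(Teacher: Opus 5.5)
Your proposal is correct and follows essentially the same route as the paper. It uses the same Lyapunov function, and your completion of squares in $Z_1,Z_2$ is exactly the paper's splitting of the matrix into a positive semidefinite part. The remaining steps also coincide: the min-switching bound with $\iota=\lfloor\theta_1+1\rfloor_\nu$ together with Lemmas~\ref{lemma:pred_error}--\ref{lemma:chi_rho}, the inequality $V(\xi^+)\le(1-\gamma)V(\xi)+\gamma$ (the paper's S-procedure step), the Schur complement on the blocks $\mu P_{\lfloor i+1\rfloor_\nu},Z_1,Z_3$ yielding $\Psi_{i,\ell}$, and the vertex-wise convexity argument. Your reading of the second block of $\mathcal S_\nu$ as $\chi_{0|0}-\chi_{1|1}$ matches the paper's Lyapunov function. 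The only nit is that $\Psi_{i,\ell}\succ0$ is equivalent to the strict form of your inequality $\mathrm{diag}((1-\gamma)\Omega_i,\gamma)-\Theta\succeq0$, not to that non-strict form itself; since the strict form implies the non-strict one you use, the argument is unaffected.
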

\begin{proof}
The proof is based on demonstrating that the set $\mathcal S_\nu$ is an attractor. To do so, we must guarantee that the system state $x~(=\chi_{0|0})$ converges to the past prediction $\chi_{1|1}$, in turn, $\chi_{1|1}$ converges to a neighborhood of the nominal limit cycle $\rho_{\theta_1}$ through the predictive control law \eqref{eq:control}. Moreover, we need to provide the invariance of $\mathcal{S}_{\nu}$.

We propose the candidate Lyapunov function given by
\begin{equation*}
\begin{split}
V(\xi)\!=\begin{bmatrix}
    \chi_{1|1}-\rho_{\theta_1}\\ \chi_{0|0}-\chi_{1|1}
\end{bmatrix}^{\!\top}\begin{bmatrix}
 \mu P_{\theta_1}-Z_1 & Z_2 \\
 \star & Z_3
\end{bmatrix}\begin{bmatrix}
    \chi_{1|1}-\rho_{\theta_1}\\ \chi_{0|0}-\chi_{1|1}
\end{bmatrix},
\end{split}
\end{equation*}
for all $\xi\in\mathbb H_{\nu}$, where vectors $\{\rho_i\}_{i\in\mathbb D_\nu}$ and matrices $\{P_i\}_{i\in\mathbb D_\nu}$ in $\mathbb S^n$ are solution to \eqref{def:rho_i} and \eqref{eq:LMI} (Assumption~\ref{ass:nominal_system}), respectively. Moreover, it is clear that $V(\xi)>0$ due to \eqref{eq:LMI_V_positive_unit}. 

Define the forward increment of the Lyapunov function as $\Delta V(\xi):=\displaystyle V(\xi^+)-V(\xi)$. Let us focus on the first term $V(\xi^+)$. From its definition, we have 
\begin{align*}
	V(\xi^+)
    &=\begin{bmatrix}
    \chi_{1|1}^+-\rho_{\theta_1^+}\\ \chi_{0|0}^+-\chi_{1|1}^+
\end{bmatrix}^{\!\top}\begin{bmatrix}
 \mu P_{\theta_1^+}-Z_1 & Z_2 \\
 \star & Z_3
\end{bmatrix}\begin{bmatrix}
    \chi_{1|1}^+-\rho_{\theta_1^+}\\ \chi_{0|0}^+-\chi_{1|1}^+
\end{bmatrix}.
\end{align*}

Using Lemmas~\ref{lem:pred} and \ref{lemma:pred_error} and considering that $Z_1\succ\0$, the dynamics of system \eqref{eq:model_xitheta} states that  $\theta_1^+=u$, which yields
\begin{align*}
	V(\xi^+)&=\displaystyle \begin{bmatrix}
    \chi_{1|0}-\rho_{u}\\ \begin{bmatrix}
 T_{\theta_1}&\delta_{\theta_{1}}\\
      \end{bmatrix} \zeta
\end{bmatrix}^{\!\top}\begin{bmatrix}
 \mu P_{u}-Z_1 & Z_2 \\
 \star & Z_3
\end{bmatrix}\begin{bmatrix}
    \chi_{1|0}-\rho_{u}\\ \begin{bmatrix}
 T_{\theta_1}&\delta_{\theta_{1}}\\
      \end{bmatrix} \zeta\end{bmatrix}\\
&=\begin{bmatrix}
    \chi_{1|0}-\rho_{u}\\  \begin{bmatrix}T_{\theta_1}&\delta_{\theta_{1}}\\\end{bmatrix}\zeta
\end{bmatrix}^{\!\top}\Bigg(\begin{bmatrix}
 \mu P_{u} & \0 \\ 
 \star & Z_3+Z_2^{\!\top} Z_1^{-1}Z_2
\end{bmatrix}\\
&- \begin{bmatrix}
 Z_1 & -Z_2 \\ 
 \star & Z_2^{\!\top} Z_1^{-1}Z_2
\end{bmatrix}\Bigg)
\begin{bmatrix}
    \chi_{1|0}-\rho_{u}\\  \begin{bmatrix}T_{\theta_1}&\delta_{\theta_{1}}\\\end{bmatrix}\zeta
\end{bmatrix}.
\end{align*}

Then, since $
    \begin{bsmallmatrix}Z_1 & -Z_2 \\\star & Z_2^{\!\top} Z_1^{-1}Z_2\end{bsmallmatrix}\succeq 0$, the following upper bound of $V(\xi^+)$ is derived
leading to the following inequality
\begin{align*}
V(\xi^+)&\!\<\!\begin{bmatrix}
    \chi_{1|0}\!-\!\rho_{u}\\  \begin{bsmallmatrix}T_{\theta_1}&\!\delta_{\theta_{1}}\end{bsmallmatrix}\zeta
\end{bmatrix}^{\!\top}\begin{bmatrix}
 \mu P_{u} & \0 \\ 
 \star & \!\!\!Z_3\!+\!Z_2^{\!\top} \!Z_1^{-1}Z_2
\end{bmatrix}\begin{bmatrix}
    \chi_{1|0}\!-\!\rho_{u}\\  \begin{bsmallmatrix}T_{\theta_1}&\!\delta_{\theta_{1}}\end{bsmallmatrix}\zeta
\end{bmatrix}\\
&=\mu \Vert \chi_{1|0}\!-\!\rho_{u}\Vert_{P_{u}}^2\!\!+\!\zeta^{\!\top} \!\!\begin{bmatrix}T_{\theta_1}^\top \\ 
\delta_{\theta_{1}}^{\!\top}
\end{bmatrix} \! (Z_3\!+\! Z_2^{\!\top} Z_1^{-1}Z_2)\! \begin{bmatrix}T_{\theta_1}^{\!\top} \\ 
\delta_{\theta_{1}}^{\!\top}
\end{bmatrix}^{\!\top}\!\!\!\!\zeta.
\end{align*}

Furthermore, the control law \eqref{eq:control} 
ensures that $\Vert \chi_{1|0}\!-\!\rho_{u}\Vert_{P_{u}}^2 \< \Vert \chi_{1|0}\!-\!\rho_{\iota}\Vert_{P_{\iota}}^2$ holds for any $\iota$ in $\mathbb D_\nu$.  Therefore, selecting $\iota =\lfloor \theta_1+1\rfloor_{\nu}$, we obtain 
\begin{align*}
V(\xi^+)&\<\mu \Vert \chi_{1|0}-\rho_{\lfloor \theta_1+1\rfloor_{\nu}}\Vert_{P_{\lfloor \theta_1+1\rfloor_{\nu}}}^2\\
&\!+\!\zeta^{\!\top} \!\!\begin{bmatrix}T_{\theta_1}^\top \\ 
\delta_{\theta_{1}}^{\!\top}
\end{bmatrix} \! (Z_3\!+\! Z_2^{\!\top} Z_1^{-1}Z_2)\! \begin{bmatrix}T_{\theta_1}^{\!\top} \\ 
\delta_{\theta_{1}}^{\!\top}
\end{bmatrix}^{\!\top}\!\!\!\!\zeta.
\end{align*}

We are now in a position to use Lemma~\ref{lemma:chi_rho} through \eqref{eq:vec_zeta}: 
\begin{align*}
V(\xi^+)&\<\mu \zeta^{\!\top}\begin{bmatrix} M_{\theta_1}&\0_{n,1}
\end{bmatrix}^{\!\top} P_{\lfloor \theta_1+1\rfloor_{\nu}}\begin{bmatrix} M_{\theta_1}&\0_{n,1}
\end{bmatrix}\zeta\\
&\!+\!\zeta^{\!\top} \!\!\begin{bmatrix}T_{\theta_1}^\top \\ 
\delta_{\theta_{1}}^{\!\top}
\end{bmatrix} \! (Z_3\!+\! Z_2^{\!\top} Z_1^{-1}Z_2)\! \begin{bmatrix}T_{\theta_1}^{\!\top} \\ 
\delta_{\theta_{1}}^{\!\top}
\end{bmatrix}^{\!\top}\!\!\!\!\zeta.
\end{align*}

Thus, the forward increment $\Delta V(\xi)$ is bounded as follows
\begin{align*}
	\Delta V(\xi)&\<\displaystyle \mu \zeta^{\!\top}\begin{bmatrix} M_{\theta_1}&\0_{n,1}
\end{bmatrix}^{\!\top} P_{\lfloor \theta_1+1\rfloor_{\nu}}\begin{bmatrix} M_{\theta_1}&\0_{n,1}
\end{bmatrix}\zeta\\
&\!\!\!\!\!\!\!\!\!\!\!\!+\!\zeta^{\!\top} \!\!\begin{bmatrix}T_{\theta_1}^\top \\ 
\delta_{\theta_{1}}^{\!\top}
\end{bmatrix} \! (Z_3\!+\! Z_2^{\!\top} Z_1^{-1}Z_2)\! \begin{bmatrix}T_{\theta_1}^{\!\top} \\ 
\delta_{\theta_{1}}^{\!\top}
\end{bmatrix}^{\!\top}\!\!\!\!\zeta-\zeta^{\!\top}\begin{bsmallmatrix}
 \mu P_{\theta_1}-Z_1 & Z_2 & 0\\
 \star & Z_3 & 0\\
 \star & \star & 0
\end{bsmallmatrix}\zeta.
\end{align*}

As the objective is to ensure the asymptotic stability of $\mathcal S_\nu$, i.e., $V(\xi)\> 1$, we note that inequality $V(\xi)\> 1$ can be rewritten using the augmented vector $\zeta$:
\begin{equation}\label{eq:inv_cond}
    \zeta^{\top}\begin{bmatrix}
        \mu P_{\theta_1}-Z_1 & Z_2 & 0\\
        \star & Z_3 & 0\\
        \star & \star & -1\end{bmatrix}\zeta \>0.
\end{equation}

Therefore, an S-procedure guarantees that condition $\Delta V(\xi) < 0$ for any $\xi\not\in\mathcal S_\nu$ is equivalent to the existence of a positive scalar $\gamma\in(0,1)$ such that
\begin{equation*}
    \Delta V(\xi)+\gamma (V(\xi)-1) \< 0 \quad\quad\Leftrightarrow\quad\quad -\zeta^{\!\top}\Phi_{\theta_1}\zeta\<0,
\end{equation*}
where
\begin{equation*}
\begin{split}
    \Phi_{\theta_1}&=\begin{bmatrix}
 (1\!-\!\gamma)\Omega_{\theta_1} & \!\!\!\!\!\!\0_{n,1}\\
 \star & \!\!\!\!\!\!\gamma
\end{bmatrix}\! -\! \mu \!\begin{bmatrix} M_{\theta_1}&\!\!\!\!\0_{n,1}
\end{bmatrix}^{\!\top}\!\! P_{\lfloor \theta_1\!+\!1\rfloor_{\nu}}\!\begin{bmatrix} M_{\theta_1}&\!\!\!\!\0_{n,1}
\end{bmatrix}\\
&- \!\zeta^{\!\top} \!\!\begin{bmatrix}T_{\theta_1}^\top \\ 
\delta_{\theta_{1}}^{\!\top}
\end{bmatrix} \! (Z_3\!+\! Z_2^{\!\top} Z_1^{-1}Z_2)\! \begin{bmatrix}T_{\theta_1}^{\!\top} \\ 
\delta_{\theta_{1}}^{\!\top}
\end{bmatrix}^{\!\top}\!\!\!\!\zeta
\end{split}
\end{equation*}

Using again the Schur complement, condition $\Phi_{\theta_1}\succ 0$ is also equivalent to $\Psi_{\theta_1}\succ\0$, where
\begin{equation*}
\Psi_{\theta_1}\!\!=\!\!\begin{bsmallmatrix}
    (1-\gamma)\Omega_{\theta_1} & \0 & \mu M_{\theta_1}^{\!\top} P_{\lfloor \theta_1\!+\!1\rfloor_{\nu}}& \begin{bsmallmatrix}
            \delta A_{\nu(\theta_1)}^{\top}Z_2^{\!\top} \\ \delta A_{\nu(\theta_1)}^{\top}Z_2^{\!\top}
        \end{bsmallmatrix} & \begin{bsmallmatrix}
            \delta A_{\nu(\theta_1)}^{\top}Z_3 \\ \delta A_{\nu(\theta_1)}^{\top}Z_3
        \end{bsmallmatrix}\\
    \star & \gamma & \0 & \delta_{\theta_1}^{\top} Z_2^{\!\top} &\delta_{\theta_1}^{\top}Z_3\\
    \star &\star & \mu P_{\lfloor \theta_1\!+\!1\rfloor_{\nu}} & \0 &\0\\
    \star &\star & \star & Z_1 & \0_{n}\\
    \star &\star & \star & \star & Z_3\\
    \end{bsmallmatrix},
\end{equation*}
with $\delta_{\theta_1}$ defined in Lemma~\ref{lemma:pred_error}. The last step consists of recalling that $\delta A_{j}$ and $\delta B_{j}$ admit the polytopic representation \eqref{def:polytope} and noticing that condition $\Psi_{\theta_1}\succ\0$ is linear with respect to $(\delta A_{\nu(\theta_1)},\delta B_{\nu(\theta_1)})$. It suggests that the previous inequality can be rewritten as $\Psi_{\theta_1}=\sum_{\ell\in\mathrm L}\lambda_{\ell}\Psi_{\theta_1,\ell}\succ\0$, for $\lambda_\ell\in(0,1)$ such that $\sum_{\ell\in\mathrm L} \lambda_\ell=1$. Hence, the strict positivity of $\Psi_{\theta_1}\succ\0$ is ensured thanks to conditions \eqref{eq:condition_rede_unit}, proving the robust global exponential convergence to attractor $\mathcal{S}_{\nu}$.

To complete the proof, it remains to show that $\mathcal S_\nu$ is invariant. To do so, notice that if condition \eqref{eq:condition_rede_unit} is satisfied, we already showed that $\Delta V(\xi) + \gamma ( V(\xi)-1)<0$, which implies that 
\begin{equation*}
    \begin{split}
        V(\xi^+)&=V(\xi)-\gamma ( V(\xi)-1)\<(1-\gamma)V(\xi)+\gamma.
    \end{split}
\end{equation*}

Since $\xi\in \mathcal S_\nu$ ($V(\xi)\leq 1$) and $\gamma\in(0,1)$, then $V(\xi^+)\<(1-\gamma)+\gamma=1$ holds true. Hence, $\xi^+\in \mathcal S_\nu$.
\end{proof}
\begin{remark}
    It is worth recalling that in Theorem~\ref{th:matrix_W}, $P_i$ and $\rho_i$ are fixed apriori due to Assumption~\ref{ass:nominal_system}. The only decision variables are $\mu$ (a scaling parameter of $P_i$) and $Z_i$. Also, note that conditions $\Psi_{i,\ell}\succ\0$ are not LMI due to $\gamma\in(0,1)$. However, one can fix it and perform a gridding.
\end{remark}

Since the main motivation of this work is the achievement of a less conservative attractor $\mathcal{S}_{\nu}$ for system \eqref{eq:model_xitheta}, we next derive an optimization problem, leveraging the decision variable $\mu$. 
\begin{corollary}\label{op:problem} Under Assumption~\ref{ass:nominal_system} and for a given parameter $\gamma\in(0,1)$ and $\alpha>0$, assume that there exist decision variables $\mu>0$ and matrices $Z_1\in\mathbb S^n$, $Z_2\in\mathbb R^{n}$ and $Z_3\in\mathbb S^{n}$, solution to the following optimization problem
    \begin{equation*}
\begin{aligned}
\max \ &\alpha \sum_{i\in\mathbb D_\nu}\textup{Tr}(\mu P_i-Z_1)\\
\textrm{s.t.} \ &Z_1\succ \0,\  \Omega_i\succ\0,
\Psi_{i,\ell}\succ \0,\ \forall i \in\mathbb D_\nu,\  \ell\in\mathbb L.
\end{aligned}
\end{equation*}
\end{corollary}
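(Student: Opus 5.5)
The corollary is essentially a reformulation of Theorem~\ref{th:theo3}. The plan is to show that the constraint set of the optimization problem lies inside the feasibility set of that theorem. Then every feasible point, and in particular an optimizer, certifies that the corresponding $\mathcal S_\nu$ is an ultimately robustly globally exponentially stable attractor. The objective is chosen so that maximizing it shrinks $\mathcal S_\nu$.

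\textbf{Step 1: feasibility reduces to Theorem~\ref{th:theo3}.} Take any feasible $(\mu,Z_1,Z_2,Z_3)$. The constraints $Z_1\succ 0$, $\Omega_i\succ 0$ and $\Psi_{i,\ell}\succ 0$ are exactly \eqref{eq:LMI_V_positive_unit} and \eqref{eq:condition_rede_unit}, except for $Z_3\succ 0$. This condition is implied and need not be added. Indeed, $\Psi_{i,\ell}\succ 0$ forces its last diagonal block $Z_3$ to be positive definite, and so does the lower-right block of $\Omega_i\succ 0$. Hence all hypotheses of Theorem~\ref{th:theo3} hold, and its conclusion applies verbatim to the set $\mathcal S_\nu$ built from this feasible point. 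I would also note that, for fixed $\gamma$, every constraint is an LMI in $(\mu,Z_1,Z_2,Z_3)$, and the objective is linear. So the problem is a semidefinite program, and a line search over $\gamma\in(0,1)$ handles the only nonlinearity, as in the remark following Theorem~\ref{th:theo3}.

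\textbf{Step 2: why the objective shrinks the attractor.} The section of $\mathcal S_\nu$ at a fixed $\theta_1=i$ is the ellipsoid $\mathcal E(\Omega_i)$ in the variables $(\chi_{1|1}-\rho_i,\ \chi_{0|0}-\chi_{1|0})$. In particular, its projection along the deviation from the nominal cycle is governed by the block $\mu P_i - Z_1$ together with the Schur complement involving $Z_2$ and $Z_3$. Making $\mu P_i - Z_1$ larger in trace makes the quadratic form grow faster, so the ellipsoid becomes smaller, roughly in the sense of the sum of its squared inverse semi-axes. Summing over $i\in\mathbb D_\nu$ treats all points of the limit cycle simultaneously. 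The weight $\alpha>0$ only scales the objective and does not change the maximizer.

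\textbf{Main obstacle.} The delicate point is that the trace is only a heuristic surrogate for the volume or size of $\mathcal E(\Omega_i)$. A rigorous statement of minimality would need $\log\det$ or an inclusion criterion. I would therefore phrase the result as: the optimizer yields a valid certified attractor $\mathcal S_\nu$, with the trace criterion acting as a convex size proxy. I would also check boundedness of the problem. The constraint $\Psi_{i,\ell}\succ 0$ contains the block $\mu P_{\lfloor i+1\rfloor_\nu}$ coupled to $\mu M_i^\top P_{\lfloor i+1\rfloor_\nu}$. By a Schur complement, this yields $(1-\gamma)\Omega_i \succ \mu M_i^\top P_{\lfloor i+1\rfloor_\nu} M_i + \cdots$. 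Together with the $\gamma$ row involving $\delta_i^\ell$, which ties the scale of $Z_2$ and $Z_3$ to the constant $\gamma$, this should prevent $\mu$ from growing unboundedly whenever $\delta_i^\ell\neq 0$, so the supremum is finite.
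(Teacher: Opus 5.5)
Your proposal is correct and follows the paper's route. The paper simply says the corollary follows directly from the proof of Theorem~\ref{th:theo3}: any feasible point of the program satisfies that theorem's hypotheses. Your observation that the omitted constraint $Z_3\succ 0$ is implied by $\Omega_i\succ 0$ fills in the one detail the paper leaves implicit, and your remarks on the trace objective as a size proxy and on boundedness go beyond what the corollary claims and are not needed.
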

\begin{proof}
    The proof is omitted because it is a direct from proof of Theorem 2. 
\end{proof}
\section{Illustrative example}\label{sec:example}
Consider the switched affine system \eqref{eq:model_x} borrowed from \cite{egidio2019novel}, consisting of two unstable modes given by  
\begin{equation*}
    A_1(\bar\delta)=e^{F_1 T}+\begin{bsmallmatrix}
            0 & \bar\delta & 0\\0 & 0 & 0\\0& 0 &0
        \end{bsmallmatrix},~B_1(\bar\delta)=\int_0^{T} e^{F_1 s}g_1\textup{d}s  + \begin{bsmallmatrix}
            0\\4\bar\delta\\0
        \end{bsmallmatrix},
\end{equation*}
\begin{equation*}
    A_2(\bar\delta)=e^{F_2 T}+\begin{bsmallmatrix}
            0 & 0 & 0\\ 2\bar\delta & 0 & 0\\0&0&0
        \end{bsmallmatrix},~B_2(\bar\delta)=\int_0^{T} e^{F_2 s}g_2\textup{d}s  + \begin{bsmallmatrix}
            2.8\bar\delta\\0\\0
        \end{bsmallmatrix},
\end{equation*}
where
\begin{equation*}
    F_1\!=\!\begin{bsmallmatrix}
            -3 & -6 & 3\\2 & 2 & -3\\1.6& 0 &-2
        \end{bsmallmatrix},~F_2\!=\!\begin{bsmallmatrix}
            1 & 3 & 3\\-0.2 & -3 & -3\\0&0&-2
        \end{bsmallmatrix},~g_1\!=\!\begin{bsmallmatrix}
            0.5\\0\\0
        \end{bsmallmatrix},~g_2\!=\!\begin{bsmallmatrix}
            0\\0\\0.5
        \end{bsmallmatrix},
\end{equation*}
and parameter $\bar\delta$ is assumed to be unknown and time-varying, but bounded by $|\bar\delta|\<d$ with $d\>0$. Here, the nominal matrices are chose as $\bar A_i=A_i(0)$ and $\bar B_i=B_i(0),~\forall j\in\mathbb{K}$, and cycle $\nu=\{1,2\}$. With those nominal matrices, Assumption~\ref{ass:nominal_system} is verified, yielding
\begin{equation*}    P_1=\begin{bsmallmatrix}2.43&13.34&-9.17\\13.34&95.4&-57.69\\-9.17&-57.69&70.21\end{bsmallmatrix},\ P_2=\begin{bsmallmatrix}13.4&13.21&-9.44\\13.21&17.84&-8.34\\-9.44&-8.34&16.03\end{bsmallmatrix},
\end{equation*}
and the nominal limit cycle given by $ \rho_1=\begin{bsmallmatrix}3.84&
-0.65&
0.36\end{bsmallmatrix}^{\!\top}$, 
$\rho_2=\begin{bsmallmatrix}1.12&0.014&1.1042\end{bsmallmatrix}^{\!\top}$.

Considering the previous vectors $\{\rho_i\}_{i\in\mathbb D_\nu}$ and matrices $\{P_i\}_{i\in\mathbb D_\nu}$ satisfying Assumption~\ref{ass:nominal_system} and setting $\gamma=0.125$ and $\alpha=0.0001$, the optimization problem of Theorem~\ref{op:problem}  is verified using the CVX toolbox in MATLAB. The attractor $\mathcal{S}_{\nu}$ is estimated and compared to attractor 
\begin{equation*}
    \begin{split}
        \widetilde{\mathcal S}_\nu \!\!:=\!\! \left\{ \xi\!\in\!\mathbb H_\nu^\star, \mbox{ s.t. } \begin{bsmallmatrix}
    \chi_{1|1}\!-\!\rho_{\theta_1}\\ \chi_{0|0}\!-\!\chi_{1|0}
\end{bsmallmatrix}^{\!\top}\widetilde{\Omega}_{\theta_1}\begin{bsmallmatrix}
    \chi_{1|1}\!-\!\rho_{\theta_1}\\ \chi_{0|0}\!-\!\chi_{1|0}
\end{bsmallmatrix}\!\!\<\!1\!\right\},
    \end{split}
\end{equation*}
where $\widetilde{\Omega}_{\theta_1}= \begin{bsmallmatrix}
 P_{\theta_1}\!-\!R & R \\ 
 \star & Q+R
\end{bsmallmatrix}$, obtained in \cite{portilla2026a}. 

In Table~\ref{table:1}, the comparison is quantified by computing the ellipsoidal volume of each attractor by taking into account that, for any ellipsoid $\mathcal E(M_i)$ with $M_i\succ\0$, we have
\begin{equation}\label{eq:vol_ellip}
    \log\text{Vol}(\mathcal E(M_i))\propto c_i:=\log\det(M_i^{-1}),
\end{equation}
particularly, $M_i=\Omega_i$ for $\mathcal{S}_{\nu}$ and $M_i=\widetilde\Omega_i$ for $\widetilde{\mathcal{S}}_{\nu}$. 
\begin{table}[!t]
\begin{center}
\caption{Characterization of the ellipsoidal attractor volume using constant $c_i$ in \eqref{eq:vol_ellip} in log-scale.}\label{table:1}
\begin{tabular}{c|cc|cc|}
\cline{2-5}
                       & \multicolumn{2}{c|}{$d=0.0037$}    & \multicolumn{2}{c|}{$d=0.04$}    \\ \hline
\multicolumn{1}{|c|}{Attractor} & \multicolumn{1}{c|}{$c_1$} & \multicolumn{1}{c|}{$c_2$} & \multicolumn{1}{c|}{$c_1$} &  \multicolumn{1}{c|}{$c_2$}\\ \hline
\multicolumn{1}{|c|}{$\mathcal{S}_{\nu}$} & \multicolumn{1}{c|}{$-41.58$} & \multicolumn{1}{c|}{$-40.47$} & \multicolumn{1}{c|}{$-16.38$} & \multicolumn{1}{c|}{$-15.27$}  \\ \hline
\multicolumn{1}{|c|}{$\widetilde{\mathcal{S}}_{\nu}$} & \multicolumn{1}{c|}{$-29.89$} & \multicolumn{1}{c|}{$-28.79$} & \multicolumn{1}{c|}{\textup{---}} & \multicolumn{1}{c|}{\textup{---}} \\ \hline
\end{tabular}
\end{center}
\vspace{-0.65cm}
\end{table}
For the parameter $d=0.0037$, Table~\ref{table:1} shows that the volume of the ellipsoidal attractor for $\mathcal{S}_{\nu}$ with solution $\mu=1.91$ is smaller than that for $\widetilde{\mathcal{S}}_{\nu}$. This indicates that the attractor represented by $\mathcal{S}_{\nu}$ is less conservative. This is primarily due to the decision variable $\mu>0$ in $\mathcal{S}_{\nu}$, which allows for refinement. Furthermore, even when $d=0.04$, namely, a significant uncertainty in the system parameters, the verification of the optimization problem stated in Theorem~\ref{op:problem} is carried out with $\mu=0.0156$ as illustrated in Table~\ref{table:1}. In contrast, the attractor $\widetilde{\mathcal{S}}_{\nu}$ cannot be certified due to the unfeasible conservative conditions of \cite{portilla2026a}, which is indicated as $(-)$.

In \cite{portilla2026a}, the stability condition depends on a parameter $\bar \mu$ (not a decision variable), which was directly linked to condition \eqref{eq:LMI}, imposing a positivity condition on \eqref{eq:LMI} in such a way that $\bar A_{\nu(i)}^{\!\top} P_{\lfloor i+1\rfloor_\nu} \bar A_{\nu(i)}- (1-\bar \mu)P_i\prec0$. Clearly, the previous inequality was directly considered in Lemma 3 of \cite{portilla2026a}, which is restructured and refined in this contribution (compare to Lemma 3), as it was identified as a source of conservatism.  

Finally, for initial conditions $x_0=\begin{bsmallmatrix}-2&2&-2\end{bsmallmatrix}^{\!\top}$ and $\sigma_0=u_0,$ where $u_0$ is a random value such that $u_0\in\{1,2\}$, the system response of the closed-loop system \eqref{eq:model_xitheta}-\eqref{eq:control} is depicted in Figure~\ref{fig:predictor}. Indeed, this figure shows the effectiveness of the predictive control law, which evidences the robust convergence to the nominal limit cycle $\{\rho_i\}_{i\in\mathbb D_{\nu}}$, compensating for the input delay and the system uncertainties.
\begin{figure}[!t]   
     \centering
      \includegraphics[width=0.45\textwidth]{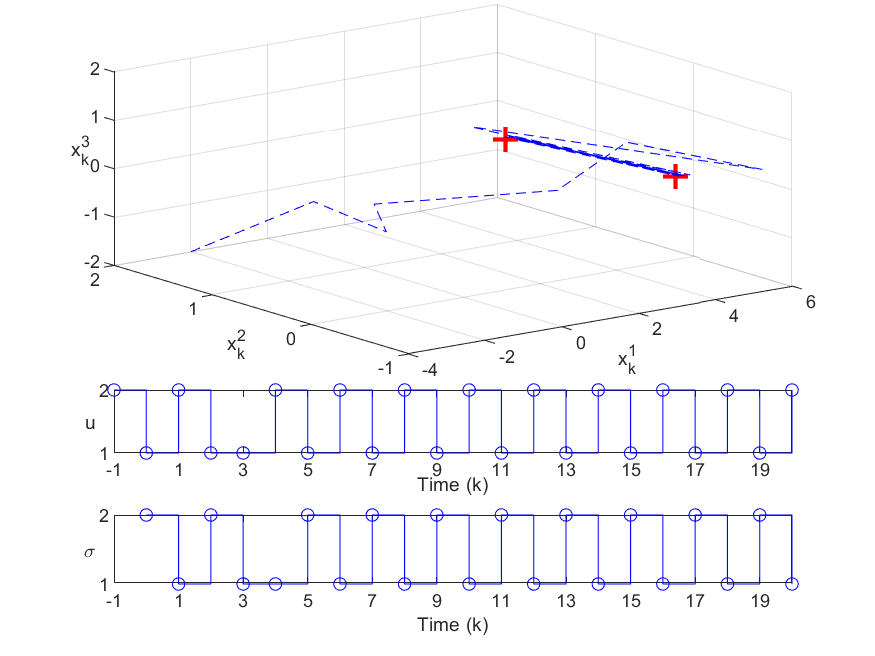}
      \vspace{-0.35cm}
    \caption{State trajectory of system \eqref{eq:model_xitheta} using the control law \eqref{eq:control}.}
      \label{fig:predictor}
    \vspace{-0.5cm}
\end{figure}

\section{Conclusions}\label{sec:conclusions}
This paper dealt with the robust stabilization of uncertain switched systems with a unitary input delay. Unlike in \cite{portilla2026a}, we provide a deeper study of the prediction scheme, yields to more tractable properties and an efficient structure for the Lyapunov function. These enhancements allowed us to obtain both less conservative stabilization conditions and a more accurate estimation of the attractor, which is illustrated through an academic example.
Finally,  the properties of the predictor and the structure of the Lyapunov function lay the groundwork for the case of a large delay, which will be addressed in future works.

\bibliographystyle{ieeetr}
\bibliography{bib_SAS}

\end{document}